\newcommand{\keywords}[1]{\par\textbf{Keywords:} #1\par}
\newtheorem{lemma}{Lemma}[section] 
\newtheorem{theorem}{Theorem}[section]
\title{The Generalized Word Count in Two-Level Fractional Factorial Designs}
\author{Xietao Zhou}
\affil{\small Department of Mathematics, King's College London}
\affil{\small Strand, London WC2R 2LS, UK.}
\affil{\small xietao.zhou@kcl.ac.uk}
\author{Steven G.\ Gilmour}
\affil{\small Department of Mathematics, King's College London}
\affil{\small Strand, London WC2R 2LS, UK.}
\affil{\small steven.gilmour@kcl.ac.uk}
\date{} 
\begin{document}
\maketitle

\begin{abstract}
Unreplicated two-level factorial designs are often used in screening experiments to determine which factors out of a large plausible set are active. A theorem regarding the generalized word count pattern is stated and proved for unreplicated designs. It is shown that a phenomenon regarding optimal designs seen in the recent literature can be explained by the theorem obtained.
\end{abstract}
\keywords{$Q_B$-optimality, $G_2$-aberration, orthogonal array, G-aberration.}
\section{Introduction}

Factorial designs of various types are useful when experimenters are studying the effects of several factors on one or more responses and they have seen wide application in industrial research and many other areas. Many criteria have been suggested to evaluate designs and among them, the minimum aberration criterion was proposed in \cite{7c8d897c-7cec-366b-9bba-57448ea51aaa} to minimize the aliasing between factorial effects. This criterion only works for regular designs and for non-regular designs, minimum $G_2$ aberration was introduced in \cite{deng1999minimum} which criterion sequentially minimizes the sequence of metrics, $B_s(D)$, $s=1, \ldots,m$ for $m$ factors, which are the corresponding summations of squares of normalized J-characteristics defined in that paper. The metrics, $B_s(D)$, were generalized to the concept of generalized word count $b_s(i_1,\cdots,i_p)$ in \cite{Gilmour2010} to measure aliasing of all effects with $s$ factors having the same number of factors at a particular order when polynomial effects are considered. When restricted to two-level designs, $B_s(D)$ is equivalent to $b_s(s)$, a measure of aliasing between the intercept and interactions with $s$ factors in \cite{Gilmour2010}. For the rest of the paper, we will simply use $b_s$ as the notation of the generalized word counts. We will give the definition of the generalized word count $b_s$ in section~\ref{sec:OverviewGwc} and as a part of this definition, a definition equivalent to that of the J-characteristics in \cite{deng1999minimum} will be given.

Various aspects of minimum $G_2$ aberration have been explored in the existing literature. The properties of J-characteristics and the projection justification of minimum $G_2$ aberration were discussed in \cite{27196536-5cf4-3f7b-bc60-2f14e0732c03}. The projection aspects of minimum $G_2$ aberration were further explored in \cite{4e0135ef-cb45-3990-b0f6-85b091a22833} by associating it with the estimation capacity proposed by \cite{10.1111/1467-9868.00164}. The set of $b_s$ were associated with the $Q_B$ criterion in \cite{Gilmour2010} to use prior probabilities of main effects and two factor interactions being in the final fitted model to better reflect experimenters' beliefs compared to other criteria. The process of searching for $Q_B$ optimal designs was discussed in \cite{TG2024}. Lower bounds on the maximum of J-characteristics have been proved and used to construct designs that maximise the generalized resolution proposed in \cite{0cd53606-262f-394a-93e3-9e0a0d082560} from orthogonal arrays in \cite{07f1dda9-a7d9-3b4a-8288-ecfa8188fb28}. The  minimum $G_2$ aberration itself has also been used in \cite{Hameed08042025a} as a secondary criterion to further distinguish the $D$- and $A$-optimal designs they found for run sizes one or two more than a multiple of four.

In this paper we aim to understand the generalized word count $b_s$ from a new angle, rather than assuming unrelated and scattered values of each $b_s$ for $s \in \{1,\ldots,m\}$. We show that for unreplicated two-level factorial designs the sum over all $b_s$ is a constant which is jointly determined by the number of factors and experimental treatments. The theorem could be of interest for practical application since unreplicated fractional factorial designs are effective in isolating influential factors as in \cite{cd3c1127-41ea-360d-a880-30452b921020}. A specific phenomenon found in \cite{Hameed08042025b} can be explained by our theorem. Throughout the rest of this paper, we use the usual centered parameterization, with levels coded as $-1$ and $1$.



The rest of this paper is organized as follows. An overview of generalized word counts is given in Section~\ref{sec:OverviewGwc}, followed by the statement and proof of our main theorem in Section~\ref{sec:ThmProof}. The discussion of the phenomenon noted in \cite{Hameed08042025b} and other possible applications of this theorem can be found in Section~\ref{sec:ApplicationThm}.

\section{Overview of the generalized word counts}
\label{sec:OverviewGwc}
The generalized word count, $b_s(i_1,\cdots,i_p)$, was first introduced in \cite{Gilmour2010} as a generalization of the metric $b_s$ in \cite{deng1999minimum} to measure aliasing of all effects with $s$ factors having the same number of factors at a particular order when polynomial effects are considered. As an example, in a second order linear model for which linear main effects, quadratic main effects and two-factor interactions are considered, $b_3(1,2)$ measures the aliasing between the intercept parameter and all the interactions involving three factors, the linear effect of one and the quadratic effects of the other two, while $b_3(3,0)$ is equivalent to $b_3$ as if there were no quadratic effects.

When quadratic effects are not considered the definition of $b_s$ can be found in \cite{TG2024} as 
\begin{equation}
\label{3.101}
b_s=\sum_{|g|=s} R_s(g) ; \quad R_s(g)=\frac{1}{N^2}\left[\sum_{h=1}^N\left(X_{h 1} \cdots X_{h g}\right)\right]^2,
\end{equation}
where $X_1, \cdots, X_m$ are all the columns forming the design, $g$ is a subset of $s$ columns of this full set and the J-characteristics of \cite{deng1999minimum} can be given as
\begin{equation}
\label{3.102}
 J_s(g)=\sum_{h=1}^N\left(X_{h 1} \cdots X_{h g}\right).
\end{equation}
As mentioned in the introduction, the set of $b_s$ was used one by one sequentially in \cite{deng1999minimum}. Later, components of this set were combined as linear combinations to define a new criterion motivated by estimation capacity by \cite{4e0135ef-cb45-3990-b0f6-85b091a22833}. This criterion, however, is limited to scenarios in which all main effects are expected to be active, which might not be true in a practical situation. The $Q_B$ criterion defined in \cite{Gilmour2010} deals with this potential issue by allowing flexible prior probabilities of pairs of main effects and interactions to fully enable experimenters to use their prior beliefs about the experiments. With the aforementioned definitions of $b_s$ and $J_s(g)$ we prove our main theorem in the next section.

\section{Main theorem}

\label{sec:ThmProof}
The generalized word count $b_{s}$ can be 
shown to have the following properties.

\begin{theorem}\label{mainThm}
For an $N$ run design for $m$ factors, if all $N$ runs have distinct treatments, then $$\sum_{s=1}^m b_{s}=C_{m,N}, $$
where $C_{m,N}$ is a constant which depends only on the number of factors and number of runs and not on the design.
\end{theorem}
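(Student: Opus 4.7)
The plan is to show that $\sum_{s=1}^m b_s = 2^m/N - 1$, a quantity depending only on $m$ and $N$. The strategy is to augment the sum with the $s=0$ term (which equals $1$ under the convention that the empty product in \eqref{3.102} gives $J_0 = N$, so $R_0 = 1$), rewrite the total as a double sum over pairs of rows, and exploit the fact that distinct treatments make all off-diagonal contributions vanish.

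First I would fix notation. For $g \subseteq \{1,\ldots,m\}$ let $X_{hg} = \prod_{i \in g} X_{hi}$, so by \eqref{3.101} and \eqref{3.102},
\begin{equation*}
\sum_{s=0}^m b_s \;=\; \frac{1}{N^2}\sum_{g \subseteq \{1,\ldots,m\}} J_{|g|}(g)^2 \;=\; \frac{1}{N^2}\sum_{g} \sum_{h,h'=1}^N X_{hg}X_{h'g}.
\end{equation*}
Swapping the order of summation and using $X_{hg}X_{h'g} = \prod_{i \in g}(X_{hi}X_{h'i})$ turns the inner sum into a product-of-sums:
\begin{equation*}
\sum_{g \subseteq \{1,\ldots,m\}} \prod_{i \in g}(X_{hi}X_{h'i}) \;=\; \prod_{i=1}^m \bigl(1 + X_{hi}X_{h'i}\bigr).
\end{equation*}
This is the one trick in the proof; I expect this step to be the main hurdle for a reader, but once identified, everything else is forced.

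Next I would observe that since each $X_{hi} \in \{-1,+1\}$, the factor $1+X_{hi}X_{h'i}$ equals $2$ when $X_{hi}=X_{h'i}$ and $0$ otherwise. Hence the product above equals $2^m$ if the two rows $h$ and $h'$ have identical treatment and vanishes otherwise. The hypothesis that all $N$ runs are distinct treatments forces this situation to occur precisely when $h = h'$, so the double sum collapses to the diagonal:
\begin{equation*}
\sum_{g \subseteq \{1,\ldots,m\}} J_{|g|}(g)^2 \;=\; \sum_{h=1}^N 2^m \;=\; N\,2^m.
\end{equation*}

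Dividing by $N^2$ gives $\sum_{s=0}^m b_s = 2^m/N$, and subtracting the $s=0$ term yields
\begin{equation*}
\sum_{s=1}^m b_s \;=\; \frac{2^m}{N} - 1 \;=:\; C_{m,N},
\end{equation*}
which depends only on $m$ and $N$, as claimed. The only place the distinctness hypothesis enters is in killing the off-diagonal contributions; if treatments could repeat, each matching pair would add another $2^m$ and the constant would be replaced by a design-dependent count of coincidences.
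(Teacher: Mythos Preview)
Your argument is correct and in fact yields the explicit value $C_{m,N}=2^m/N-1$, which the paper never states. The paper's own proof is quite different: it establishes the result by \emph{invariance} rather than direct computation. Through a chain of lemmas it shows that $\sum_s b_s$ is unchanged under row switches, column switches, column negations, and a particular coordinate exchange (swapping the treatment $[-1,1,\ldots,1]$ for $[1,\ldots,1]$), and then argues that any two unreplicated $N$-run designs are connected by such moves. Your route is shorter and more transparent: the subset-product identity $\sum_{g\subseteq[m]}\prod_{i\in g}a_i=\prod_i(1+a_i)$ collapses the whole calculation into a Hamming-type inner product between rows, and the distinct-treatments hypothesis is used exactly once to kill the off-diagonal terms. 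What the paper's approach buys is a move-by-move picture of why the quantity is rigid under local changes to the design; what yours buys is the closed-form constant and an immediate generalisation---replicated rows simply add $2^m/N^2$ per coincident pair, as you note at the end.
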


We will start by proving a sequence of lemmas before Theorem~\ref{mainThm} can be proved. The first result is about the summation of elements in the rows of a full factorial design matrix, i.e.\ the matrix contains all possible main effects and interaction columns, and can be stated as follows.

\begin{lemma}\label{lem:1}
In a full factorial design matrix with $m$ factors, considering the sub-matrix excluding the constant column, the row sum of elements will always be $-1$ for each row except for the row $[1, \cdots, 1]$, for which the sum of elements would be $2^m-1$. 
\end{lemma}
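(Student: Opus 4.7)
The plan is to index the non-constant columns of the full factorial design matrix by non-empty subsets $S \subseteq \{1, \ldots, m\}$, where the column associated with $S$ records the interaction $\prod_{i \in S} X_i$. For a row corresponding to a treatment $(x_1, \ldots, x_m) \in \{-1, +1\}^m$, the entry in column $S$ is then $\prod_{i \in S} x_i$, so the row sum in question is
\begin{equation*}
\sum_{\emptyset \neq S \subseteq \{1,\ldots,m\}} \prod_{i \in S} x_i.
\end{equation*}

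The key step is to recognise this as coming from the product identity
\begin{equation*}
\prod_{i=1}^{m}(1 + x_i) = \sum_{S \subseteq \{1,\ldots,m\}} \prod_{i \in S} x_i,
\end{equation*}
where the empty-subset term contributes $1$. Subtracting that term rewrites the row sum as $\prod_{i=1}^m (1 + x_i) - 1$, which immediately reduces the lemma to a two-case analysis on the row.

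I would then conclude by cases. If at least one coordinate satisfies $x_i = -1$, the corresponding factor $1 + x_i$ vanishes and the whole product collapses to $0$, leaving row sum $-1$. If instead $x_i = 1$ for every $i$, each factor equals $2$ and the product is $2^m$, giving row sum $2^m - 1$; this is precisely the distinguished row $[1, \ldots, 1]$.

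There is no serious obstacle here; the only point that requires care is to set up the bijection between interaction columns and non-empty subsets of $\{1, \ldots, m\}$ cleanly so that the passage from a sum over columns to the product $\prod_{i=1}^m(1+x_i)$ is unambiguous. Once that correspondence is stated, the rest is a one-line computation.
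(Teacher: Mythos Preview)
Your argument is correct and complete: the bijection between non-constant columns and non-empty subsets $S\subseteq\{1,\ldots,m\}$ is standard, and the product identity $\prod_{i=1}^m(1+x_i)=\sum_{S}\prod_{i\in S}x_i$ immediately yields the row sum as $\prod_i(1+x_i)-1$, from which both cases follow.

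The paper, however, takes a different route. Rather than invoking the subset-product expansion, it proceeds by induction on $m$: it verifies the claim for $m=1,2$, then writes the full factorial design matrix for $m+1$ factors in block form
\[
\begin{bmatrix}
\mathbf{1} & -\mathbf{1} & D_m & S_m & -D_m & -S_m\\
\mathbf{1} & \phantom{-}\mathbf{1} & D_m & S_m & \phantom{-}D_m & \phantom{-}S_m
\end{bmatrix},
\]
and computes the row sums of the new matrix from those of the old one. Your direct algebraic approach is shorter and cleaner for this lemma in isolation. The paper's inductive block decomposition, on the other hand, makes explicit exactly which columns arise from the newly added factor, and this structural information is reused verbatim in the proof of the next lemma (Lemma~\ref{lem:2}), where one needs the row sums restricted to the columns involving a specified factor. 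So the paper's choice trades local economy for a setup that pays off downstream.
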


\begin{proof}
Let $n=2^m$. We shall proceed by induction, It is trivial to see that the lemma is valid in the two cases with $m=1$ and $m=2$. We then assume that the lemma is true for a full factorial design matrix of $m$ factors. The structure of such a matrix is $\left[\begin{array}{lll}\textbf{1} & D_m & S_m\end{array}\right]$, where $ D_m$ and $S_m$ represent the sub matrices of main effects and interactions respectively. The sum of each row will then be given by $\left[\begin{array}{c}-1 \\ \vdots \\ -1 \\ n-1\end{array}\right]$.

We now consider adding a new factor as the induction step. The structure of the design matrix will now become
\begin{equation}\label{equation1}
\left[\begin{array}{cccccc}
\textbf{1} & \textbf{-1} & D_m & S_m & -D_m & -S_m \\
\textbf{1} & \textbf{1} & D_m & S_m & D_m & S_m
\end{array}\right].
\end{equation}
Taking advantage of the assumption, the sum of elements from each row can be given as
\begin{equation}
\left[\begin{array}{c}
\textbf{0} \\
\textbf{2}
\end{array}\right]+\left[\begin{array}{c}
-1 \\
\vdots \\
n-1 \\
-1\\
\vdots \\
n-1
\end{array}\right]+\left[\begin{array}{c}
1 \\
\vdots \\
-(n-1) \\
-1 \\
\vdots \\
n-1
\end{array}\right]=\left[\begin{array}{c}
0 \\
\vdots \\
0 \\
2 n
\end{array}\right]=\textbf{1}+\left[\begin{array}{c}
-1 \\
\vdots \\
-1 \\
2 n-1
\end{array}\right].
\end{equation}
Again, if we set aside the constant column, $$\left[\begin{array}{c}-1 \\ \vdots \\ -1 \\ 2n-1\end{array}\right]$$ gives exactly the desired result for the updated run size $2n$ with $m+1$ factors and thus the lemma is proved.
\end{proof}

Staying with the setup of a full factorial design with $m+1$ factors, we now prove another lemma.

\begin{lemma}
    
\label{lem:2}
 Assuming $m+1$ factors, consider the row $[-1,1, \cdots, 1]$ where the $-1$ corresponds to the row element from factor column $X_i$. Then in rows other than $[-1,1, \cdots, 1]$ and $[1, \cdots, 1]$, the sum of row elements of all effects associated with factor $X_i$ will be 0. For row $[-1,1, \cdots, 1]$ and $[1, \cdots, 1]$, the result is given by $-n$ and $n$ respectively.
\end{lemma}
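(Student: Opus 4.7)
The plan is to reduce the claim to Lemma~\ref{lem:1} applied to the $m$ factors other than $X_i$. Fix a row $(x_1,\ldots,x_{m+1})\in\{-1,1\}^{m+1}$ of the $(m+1)$-factor full factorial. The effects involving $X_i$ are exactly the products $x_i\prod_{j\in S}x_j$ as $S$ ranges over subsets of $\{1,\ldots,m+1\}\setminus\{i\}$, of which there are $2^m=n$ (including $S=\emptyset$, which yields the main effect $X_i$ itself). Summing and pulling out the common factor $x_i$ gives
\[
\sum_{\text{effects involving }X_i} x_i\prod_{j\in S}x_j \;=\; x_i\sum_{S\subseteq\{1,\ldots,m+1\}\setminus\{i\}}\prod_{j\in S}x_j.
\]

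The key step is to recognise that the inner sum is precisely the row sum of the full $m$-factor factorial design on $\{X_j:j\neq i\}$, \emph{with the constant column included}, evaluated at the values $(x_j)_{j\neq i}$. By Lemma~\ref{lem:1} the row sum excluding the constant column equals $-1$ on every row except the all-ones row, where it equals $2^m-1=n-1$; adding back the contribution $+1$ of the constant column yields $0$ in the generic case and $n$ at the all-ones row.

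Putting the two steps together, the total sum of effects involving $X_i$ at the chosen row is $0$ whenever some $x_j$ with $j\neq i$ equals $-1$. The only two rows in which every such $x_j$ equals $1$ are $[1,1,\ldots,1]$, which gives $x_i\cdot n = n$, and $[-1,1,\ldots,1]$ with the $-1$ in position $i$, which gives $x_i\cdot n = -n$. This matches the claimed values exactly.

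The main point requiring care is the bookkeeping of the reduction: one must verify that the $n=2^m$ effects involving $X_i$ in the $(m+1)$-factor design correspond bijectively to the $n$ columns (constant column included) of a full $2^m$ factorial on the remaining $m$ factors, and that Lemma~\ref{lem:1} has to be applied \emph{after} reinstating this constant column. Once this identification is set up cleanly, no further obstacle remains and no induction is needed.
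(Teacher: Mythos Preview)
Your argument is correct and follows essentially the same route as the paper: both reduce the claim to Lemma~\ref{lem:1} applied to the $m$ factors other than $X_i$. Where the paper reads this reduction off the block-matrix structure of equation~(\ref{equation1}) (the new-factor column together with the $\pm D_m,\pm S_m$ blocks), you obtain the same reduction by factoring out $x_i$ algebraically and identifying the remaining sum with the full $m$-factor row sum including the constant column; the two presentations are equivalent.
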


\begin{proof}
We consider again the structure of the design matrix of $m+1$ factors. The sub-matrix that is due to the introduction of the new factor consists of the second column and the last two columns of equation~(\ref{equation1}). Applying Lemma~\ref{lem:1} we have
$$
\left[\begin{array}{cc}
-1 & 1 \\
\vdots & \vdots \\
-1 & -(n-1) \\
1 & -1 \\
\vdots & \vdots \\
1 & n-1
\end{array}\right]=\left[\begin{array}{c}
0 \\
\vdots \\
0 \\
-n \\
0 \\
\vdots \\
0 \\
0 \\
n
\end{array}\right].
$$
Note that $[-1, -(n-1)]$ corresponds to $[-1, 1,\cdots, 1]$ and $[1, (n-1)]$ corresponds to $[1, \cdots, 1]$ and the rest of the rows give the desired row elements summing to 0 and the lemma is then proved.
\end{proof}

We are now ready to state and prove a special case of  Theorem~\ref{mainThm} as follows.

\begin{lemma}\label{lem:3}
Consider a design for $m$ factors with $N$ distinct treatments, including the row $[-1,1, \cdots, 1]$ and excluding $[1, \cdots, 1]$. The first coordinate from the treatment $[-1, 1,\cdots, 1]$ is being considered for exchange. Then before and after the exchange the sum of the generalized word counts, $b_{s}$, will be a constant.
\end{lemma}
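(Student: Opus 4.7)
The plan is to expand $\sum_{s=1}^m b_s=\frac{1}{N^2}\sum_{g\neq\emptyset} J(g)^2$, where $g$ ranges over nonempty subsets of $\{X_1,\dots,X_m\}$, and to track how each $J(g)$ changes when the first coordinate of the row $[-1,1,\dots,1]$ is flipped to yield $[1,1,\dots,1]$. I would first note that flipping this one coordinate leaves $J(g)$ unchanged whenever $X_1\notin g$, because the two rows agree everywhere else. When $X_1\in g$, the contribution of this row to $J(g)$ changes from $-1$ to $+1$, so $J_{\text{new}}(g)=J_{\text{old}}(g)+2$.

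Hence the change in the sum of generalized word counts is
\begin{equation*}
\Delta=\frac{1}{N^2}\sum_{g\ni X_1}\left[(J(g)+2)^2-J(g)^2\right]=\frac{4}{N^2}\sum_{g\ni X_1}\bigl(J(g)+1\bigr),
\end{equation*}
and since the number of subsets of $[m]$ containing $X_1$ is $2^{m-1}$, it suffices to show $\sum_{g\ni X_1} J(g)=-2^{m-1}$.

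The key step is to swap the order of summation,
\begin{equation*}
\sum_{g\ni X_1} J(g)=\sum_{h=1}^N\sum_{g\ni X_1}\prod_{j\in g} X_{hj},
\end{equation*}
so that the inner sum is exactly the sum across row $h$ of the full factorial of all columns corresponding to effects that involve $X_1$. This is precisely the quantity evaluated by Lemma~\ref{lem:2} (applied with factor count $m$, so that the lemma's $n$ becomes $2^{m-1}$). It equals $-2^{m-1}$ for the row $[-1,1,\dots,1]$, $+2^{m-1}$ for the row $[1,\dots,1]$, and $0$ for every other row of the full factorial. Since the current design contains $[-1,1,\dots,1]$ but not $[1,\dots,1]$, and every other included row contributes $0$, we obtain $\sum_{g\ni X_1} J(g)=-2^{m-1}$, whence $\Delta=0$.

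The anticipated obstacle is mostly careful bookkeeping: noticing that only subsets $g$ containing $X_1$ contribute to $\Delta$, and then recognizing that after swapping summations the inner sum is precisely the row quantity handled by Lemma~\ref{lem:2}. The role of the hypothesis that the design includes $[-1,1,\dots,1]$ and excludes $[1,\dots,1]$ is exactly to make the two nonzero row contributions in Lemma~\ref{lem:2} align so that one (and only one) of them appears, producing the cancellation between the two halves of $\sum_{g\ni X_1}\bigl(J(g)+1\bigr)$.
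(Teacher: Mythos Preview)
Your proposal is correct and follows essentially the same argument as the paper's own proof: both observe that the flip changes $J_s(g)$ by $+2$ exactly when $X_1\in g$, reduce the vanishing of $\Delta$ to the identity $\sum_{g\ni X_1}J(g)=-2^{m-1}$, and then establish that identity by switching to a row-by-row summation and invoking Lemma~\ref{lem:2} together with the hypothesis that the design contains $[-1,1,\dots,1]$ but not $[1,\dots,1]$. Your write-up is in fact a bit more explicit about the algebra and the role of the hypothesis, but the ideas and the sequence of steps coincide with the paper's.
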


\begin{proof}
For the coordinate exchange described in the statement of the lemma, without loss of generality we assume the position comes from factor column $X_1$. The $J_s(g)$ value will increase by $2$ for all effects which include $X_1$ and remain unchanged for other effects and there are $2^{m-1}$ such effects which include $X_1$.

Each of these terms will contribute to the increase of the sum of the generalized word counts, $b_{s}$ by $4J_s(g)+4$ up to the scaling factor $\frac{1}{N^2}$. The lemma then suggests the summation of all these contributions is equal to 0. This means we have to verify the summation of all $J_s(g)$ associated with $X_1$ are equal to $-2^{m-1}$.

According to the definition of the J-characteristic, the summation of $J_s(g)$ of all effects including $X_1$ is the summation of $J_s(g)$ of each corresponding effect column, calculated by summing the matrix elements in that column. For the  set of all effect columns including $X_1$, the summation described can also be done equivalently row by row, for which we could apply the result of Lemma~\ref{lem:2}. 

Follow the setup of this lemma, we do the summation of matrix elements row by row and exclude row $[1, \cdots, 1]$ before the coordinate exchange. For $m+1$ factors that sum equals $-n=-2^{m}$ by Lemma~\ref{lem:2}, and thus for $m$ factors that sum equals $-2^{m-1}$, which completes the proof.  
\end{proof}

Now that we have got a special case of Theorem~\ref{mainThm}, we will proceed to the full proof by arguing that a general coordinate exchange can be transformed into our special case by row switches, column switches and multiplying whole columns by $-1$, all of which will not change the sum of the generalized word counts, $b_{s}$. The switch operations are trivial, thus we will just give a lemma regarding the operation of multiplying a whole column by $-1$ as follows.

\begin{lemma}

\label{lem:4}
For a valid design which satisfies the setup in Theorem~\ref{mainThm}, the results still hold true after an arbitrary column is multiplied by $-1$.
\end{lemma}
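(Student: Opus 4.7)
The plan is to exploit the fact that each summand $R_s(g) = J_s(g)^2/N^2$ in the definition of $b_s$ depends on the J-characteristic only through its square, so any sign flip of individual J-characteristics leaves every $b_s$ invariant. Since multiplying a column by $-1$ only ever induces such sign flips, the conclusion should be immediate from the definitions in equations~(\ref{3.101}) and (\ref{3.102}).

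First I would verify that the resulting array still satisfies the hypotheses of Theorem~\ref{mainThm}, namely that it consists of $N$ distinct treatments for $m$ factors. Multiplying the $i$-th factor column by $-1$ is nothing other than the involution on the treatment space $\{-1,1\}^m$ that negates the $i$-th coordinate of every run. Any bijection maps distinct treatments to distinct treatments, so the $N$ runs remain distinct, and the values of $m$ and $N$ are obviously preserved.

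Next I would track how each J-characteristic transforms when column $X_i$ is negated. From the definition $J_s(g)=\sum_{h=1}^{N} X_{h1}\cdots X_{hg}$, if the index $i$ does not belong to the subset $g$ then no factor in the product is affected and $J_s(g)$ is unchanged; if $i\in g$ then every summand picks up an extra factor of $-1$ and $J_s(g)$ is replaced by $-J_s(g)$. In either case $J_s(g)^2$ is preserved, hence $R_s(g)$ is preserved, hence each $b_s$ is individually unchanged, and therefore $\sum_{s=1}^{m} b_s$ takes the same value before and after the operation. There is no genuine obstacle here; the lemma is essentially a bookkeeping statement whose role is to justify, together with the trivial invariance under row and column permutations, the reduction of a general coordinate exchange to the special configuration handled by Lemma~\ref{lem:3}.
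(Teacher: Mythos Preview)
Your proposal is correct and follows essentially the same argument as the paper: negating column $X_i$ sends $J_s(g)$ to $-J_s(g)$ when $i\in g$ and leaves it unchanged otherwise, so every $R_s(g)=J_s(g)^2/N^2$ is preserved and each $b_s$ is invariant. Your additional remark that the negation is a bijection on $\{-1,1\}^m$, hence preserves distinctness of the $N$ treatments, is a small but worthwhile point that the paper's proof leaves implicit.
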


\begin{proof}
The effect of such an operation is to negate all those $J_s(g)$ associated with that particular factor to $-J_s(g)$ and do nothing to those unrelated $J_s(g)$. But for each changed $J_s(g)$, $\frac{J_s(g)^2}{N^2}$ and $\frac{(-J_s(g))^2}{N^2}$
contribute equally to the sum of the generalized word counts, $b_{s}$. Therefore the corollary is proved.
\end{proof}

The proof of Lemma~\ref{lem:4} now means that the whole sequence of lemmas centered around Theorem~\ref{mainThm} have been proved, and that leads to the proof of Theorem~\ref{mainThm} itself.

\section{Discussion}

\label{sec:ApplicationThm}
\cite{Hameed08042025b} used the minimum G-aberration criterion proposed in \cite{0cd53606-262f-394a-93e3-9e0a0d082560} as a secondary criterion to further distinguish the D-optimal designs they found for experiments with run size 3 more than a multiple of 4. The minimum G-aberration criterion uses the same set of information regarding $J_s(g)$ and $R_s(g)$ as the minimum $G_2$ aberration. Instead of calculating $b_s$, however, the minimum G-aberration criterion sorts the frequency of the J-characteristics of $s$-tuples and arranges it into a confounding frequency vector $F$ with components $F_1,F_2,\cdots$ which serve a similar purpose as $b_1,b_2,\cdots$. Therefore our arguments work similarly for the minimum G-aberration criterion as well, for which we will illustrate with a worked example.

\subsection{Worked example}

\label{sec:workingEg}
\cite{Hameed08042025b} commented that for 9 factors and 15 runs, the design which has the best $F_3$ vector has the worst $F_4$ vector. Since the minimum G-aberration looks at the same type of aliasing between effects as the minimum $G_2$ aberration, through our Theorem~\ref{mainThm} it is clear that this phenomenon can be explained by our theorem. Similar results appeared for the designs they found for 9 factors and 19 treatments. The minimal G-aberration D-optimal design they found, has achieved a significant reduction in aliasing between the main effects and two-factor interactions measured by the $F_3$ vector compared to the default JMP design. In return, however, it has introduced a lot of aliasing between two-factor interactions shown by the components of their $F_4$ vector. In fact, the generalized word count $b_4$ of the JMP design is much better than the design they obtained. This  suggests two things: 
\begin{enumerate}
    \item Our theorem successfully explains the reasons behind the observed phenomenon and provides a more general theoretical explanation. 
    \item It shows that carefully combining the information across all available $b_s$ to choose the design seems a reasonable strategy in terms of being closer to the diverse needs arising from actual experiments. To that end, the $Q_B$ criterion in \cite{Gilmour2010} is a great choice.
    \end{enumerate}

\subsection{Open Questions}
\label{sec:OpenQues}
The ways Theorem~\ref{mainThm} can be further applied remains an open question. One possibility might be try to maximize $b_5$ and higher by construction. Governed by Theorem~\ref{mainThm}, the sum of $b_1$ to $b_4$ will then go down to the minimal possible value. Combined with state-of-art constrained optimization techniques, it might then be possible to achieve some lower bound values for criteria like $Q_B$ and the projection type of criterion described in \cite{4e0135ef-cb45-3990-b0f6-85b091a22833}. This might be especially interesting for large numbers of runs and factors in an experiment when the mixed integer quadratic programming (MIQP) algorithm, proposed in \cite{article}, which verifies that a given design is optimal, ceases to converge. Theorem~\ref{mainThm} might then be applied as an important building block to check if the algorithm planned for use can do a reasonable job in searching for optimal designs. The details of these possibilities are left for future research.

\section*{Acknowledgement}
The first author expresses sincere thanks for funding through the King's-China Scholarship Council Scholarship program (File No. 202108060084).

\bibliographystyle{apalike}
\bibliography{reference}
\end{document}